\newcommand{\e}{{\beta_2}}
\newtheorem{mytheorem}{Theorem}
\newtheorem{mylemma}{Lemma}
\newtheorem{myprop}{Proposition}
\newtheorem{myfact}{Fact}
\begin{document}
\title{Noise threshold for universality of 2-input gates}
\author{Falk~Unger
\thanks{F. Unger works at the Centrum voor Wiskunde en Informatica in Amsterdam, The Netherlands. (Falk.Unger@cwi.nl)
 }}
\maketitle

\begin{abstract} 
It is known that $\epsilon$-noisy gates with 2 inputs are universal for arbitrary computation (i.e.\ can compute any function with bounded error), if all gates fail independently  with probability $\epsilon$ and $\epsilon<\e=(3-\sqrt{7})/4\approx 8.856\%$.
In this paper it is shown that this bound is tight for formulas, by proving that gates with 2 inputs, in which each gate fails with probability at least $\e$ cannot be universal. Hence, there is a threshold on the tolerable noise for formulas with 2-input gates and it is $\e$. It is conjectured that the same threshold also holds for circuits. 
 \end{abstract}

\begin{IEEEkeywords}
Computation with unreliable components, fault-tolerant computation, noise threshold
\end{IEEEkeywords}



\section{Introduction}
\label{sec:intro}
During the last decades, computers have become faster and faster, mainly due to advances in hardware miniaturization. However, there are physical limits to the possible extent of this miniaturization, and the closer one gets to these limits, the less robust and more error-prone the components become \cite{i:noisy,s:noisy}. It is estimated that the time when processor architects face these limitations is within the next decade \cite{Bobgodwell}. 

Gates, the smallest components of any processor, can fail in (at least) two ways. The first is that they do not work at all. The second is that they work most of the time correctly, and fail sometimes. This type of errors is called ``soft errors'' by hardware engineers. We deal with faults of the second type. 

In particular, we consider the computational model of \emph{noisy formulas}. Formulas are a special kind of circuits in which each gate has exactly one output wire \footnote{Precise definitions for all terms used can be found in Section \ref{Sec:prelim}.}. We ask how much noise on the gates is tolerable, such that any function can still be computed by some formula with bounded-error. We will assume throughout  that gates fail independently of each other.

This question has been studied earlier. Already in 1956 von Neumann discovered that reliable computation is possible with noisy 3-majority gates if each gate fails independently with probability less than $0.0073$ \cite{vN:reliable_organisms}.  The first to prove an upper bound on the tolerable noise was Pippenger \cite{p:noisy3}. He proved that formulas with gates of fan-in at most $k$, where each gate fails independently with probability at least $\epsilon \ge \frac{1}{2}-\frac{1}{2k}$, are not sufficient for universal computation (i.e.\ not all functions can be computed with bounded error). Feder proved that this bound also applies to circuits \cite{f:noisy}. Later, Feder's bound was improved to $\frac{1}{2}-\frac{1}{2\sqrt{k}}$ by Evans and Schulman \cite{es:noisy4}. 

For formulas with gates of fan-in $k$ and $k$ odd, Evans and Schulman \cite{es:noisyk} proved the tight bound $\beta_k=\frac{1}{2}-\frac{2^{k-2}}{k {k-1 \choose k/2-1/2}}$. Tight  here means that if all gates fail independently with the same fixed probability $\epsilon<\beta_k$, then any function can be bounded-error computed, and if each gate fails with some probability at least $\beta_k$ (which does not need to be the same for all gates), universal computation is not possible. For $k=3$  the threshold was first established by Hajek and Weller \cite{hw:noisy5}.

However, so far it has not been possible to establish thresholds for gates with \emph{even} fan-in (or even prove their existence), as pointed out in \cite{es:noisyk}. In particular, the most basic case of fan-in 2, which is most commonly used, had been elusive. An intuitive argument why even fan-in is different is that for even fan-in threshold gates (and in particular majority gates) can never be ``balanced'', in the sense that the number of inputs on which they are  $1$ cannot be the same as the number of inputs on which they are $0$.

Evans and Pippenger \cite{ep:noisy2} made some progress in this direction. First, they show that all functions can be computed by formulas with noisy NAND-gates with fan-in 2, if each NAND-gate fails with probability exactly $\epsilon$, for any $0\le \epsilon<\e=\frac{3-\sqrt{7}}{4}$. Second, they show that with NAND-gates alone this bound cannot be improved (They make some additional assumptions which we discuss below).
This left open the question of what the bound is if we allow all 16 gates with fan-in 2. We settle this question in this paper.
\begin{mytheorem}
\label{the:main}
Assume $\Delta >0$. Functions that are computable with bias $\Delta$ by a formula in which all gates have fan-in at most 2 and fail independently with probability at least $\e =(3 - \sqrt{7})/4$, depend on at most a constant number of input bits.
\end{mytheorem}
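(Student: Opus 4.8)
The plan is to fix a target input coordinate $x_i$, freeze all other inputs to arbitrary values, and track for each wire $w$ the pair $(\mu_w^0,\mu_w^1)\in[-1,1]^2$, where $\mu_w^c=2\Pr[w=1\mid x_i=c]-1$ is the bias of $w$'s value under $x_i=c$. The property of \emph{formulas} that makes this work is that, once all inputs are frozen, the two sub-formulas feeding a gate are functions of disjoint (hence independent) sets of gate-noise variables; so conditioned on $x_i=c$ the two inputs of a gate are independent, and its output bias-pair is a fixed function of the two input bias-pairs. Concretely, if the gate $g$ fails with probability $\epsilon$ then $\mu_w^c=(1-2\epsilon)\bigl(2\Pr[g(U,V)=1\mid x_i=c]-1\bigr)$, an affine function of each input bias. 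A leaf labelled $x_i$ has bias-pair $(-1,1)$; a constant leaf, or a leaf labelled by a now-frozen variable, has its bias-pair on the diagonal $\{x=y\}$; and if the formula contains at least one gate its output bias-pair lies in $[-(1-2\e),1-2\e]^2$.

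The heart of the argument would be to produce a \emph{potential} $\Phi\colon[-1,1]^2\to\mathbb R_{\ge 0}$ that is continuous, vanishes exactly on the diagonal, is invariant under $(x,y)\mapsto(-x,-y)$, is radially monotone ($\Phi(\lambda x,\lambda y)\le\Phi(x,y)$ for $0\le\lambda\le1$), and satisfies the \emph{averaging inequality}
\[
\Phi\bigl(F_g^{\e}(a,b),\,F_g^{\e}(a',b')\bigr)\ \le\ \tfrac12\bigl(\Phi(a,a')+\Phi(b,b')\bigr)
\]
for every $2$-input gate $g$ and all $a,a',b,b'\in[-1,1]$, where $F_g^{\e}$ is the map of the previous paragraph at failure probability $\e$. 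Three observations cut this down: negating an input or output wire of a gate merely applies $(x,y)\mapsto(-x,-y)$ to the relevant bias-pair, so by the symmetry of $\Phi$ it suffices to check one AND-type and one XOR-type representative among the $16$ gates; a $1$-input gate maps $(x,y)$ to $\pm(1-2\epsilon)(x,y)$, which the symmetry and radial monotonicity already handle; and a larger failure probability only scales the output bias-pair toward the origin, so by radial monotonicity $\epsilon=\e$ is the binding case. I would expect $\e=(3-\sqrt7)/4$ to be precisely the largest failure probability admitting such a $\Phi$ — the relevant algebra, e.g.\ $(1-2\e)+(1-2\e)^2=\tfrac32$, being of the kind that makes the extremal case tight — which dovetails with the positive result of Evans and Pippenger.

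Given such a $\Phi$, a Kraft-type summation over the formula tree finishes the proof. For a leaf $\ell$ put $c_\ell=2^{-b(\ell)}$, where $b(\ell)$ is the number of fan-in-$2$ gates on the path from $\ell$ to the output; a one-line induction gives $\sum_\ell c_\ell=1$, and propagating the averaging inequality down from the leaves yields
\[
\Phi(\mu_{\mathrm{out}}^0,\mu_{\mathrm{out}}^1)\ \le\ \sum_{\ell}c_\ell\,\Phi(\text{bias-pair of }\ell)\ =\ C\cdot W_i,
\]
where $C=\max\Phi$ and $W_i=\sum_{\ell\text{ labelled }x_i}c_\ell$, since every other leaf contributes $0$; moreover $\sum_i W_i=\sum_\ell c_\ell=1$. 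Conversely, if the computed function $f$ depends on coordinate $i$, fix a freezing of the other inputs making $x_i$ pivotal; the bias-$\Delta$ requirement then forces $|\mu_{\mathrm{out}}^0-\mu_{\mathrm{out}}^1|\ge4\Delta$, and since the output bias-pair lies in the compact set $\{(x,y)\in[-(1-2\e),1-2\e]^2:|x-y|\ge4\Delta\}$ on which $\Phi$ is continuous and strictly positive, $\Phi(\mu_{\mathrm{out}}^0,\mu_{\mathrm{out}}^1)\ge\eta$ for some $\eta=\eta(\Delta)>0$ (a gate-free formula outputs a single variable or a constant, which is admissible). Hence $W_i\ge\eta/C$ for every relevant coordinate, and with $\sum_i W_i=1$ this bounds the number of relevant coordinates by $C/\eta$, independently of the input length.

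The main obstacle is, unsurprisingly, the construction and verification of $\Phi$: one must exhibit an explicit potential — plausibly of the form $\Phi(x,y)=|G(x)-G(y)|$ for a suitable increasing odd profile $G$ (after sign normalization the two inputs of a gate can be taken \emph{co-oriented}, reducing the averaging inequality to a one-function optimization), or else piecewise-defined — and then check the inequality over the remaining two-parameter family for AND-type and XOR-type gates, confirming it becomes an equality somewhere at $\epsilon=\e$ and fails for $\epsilon<\e$. All the genuine computation, and the precise constant $(3-\sqrt7)/4$, lives in that step.
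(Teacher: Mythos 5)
Your architecture — a two-coordinate potential $\Phi$ on bias-pairs plus a Kraft-type summation over the tree — is a genuine reorganization of the argument, but the key technical step is left undone, and I think you underestimate the gap. The paper does \emph{not} prove (nor need) your averaging inequality $\Phi(\mathrm{out})\le\tfrac12(\Phi(\mathrm{in}_1)+\Phi(\mathrm{in}_2))$. It proves a \emph{max} inequality $|\delta_c|q(c)\le\theta\max\{|\delta_a|q(a),|\delta_b|q(b)\}$, with $\theta=1$ in general and $\theta<1$ once the biases are bounded away from $0$, and it applies this inequality only to gates strictly inside the formula (where both input wires are outputs of noisy gates, so $\Pr[\cdot]\in[\e,1-\e]$). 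It then avoids the leaf boundary entirely by choosing an input bit $x_i$ on which $f$ depends and whose every occurrence lies at depth at least $\lceil\log_2 d(f)\rceil$, and finishes with a geometric-decay-in-depth argument rather than a Kraft sum. An averaging inequality is strictly stronger than a max inequality (average $\le$ max), and this is not a cosmetic difference here.

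Concretely, your approach breaks for gates with one dead (constant or frozen) input, a case the paper never has to push through its key lemma. If $B$ is constant, its bias-pair sits on the diagonal so $\Phi(\mu_B)=0$, and the gate scales the $A$ bias-pair by $\pm(1-2\e)$. Your averaging inequality then forces $\Phi\bigl((1-2\e)v\bigr)\le\tfrac12\Phi(v)$ for all reachable $v$, and iterating gives $\Phi\bigl((1-2\e)^n v\bigr)\le 2^{-n}\Phi(v)$, i.e.\ $\Phi$ must vanish along rays through the diagonal to order at least $\log 2/\log\tfrac{1}{1-2\e}\approx 3.55$ (since $1-2\e=(\sqrt7-1)/2\approx0.823$). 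The paper's potential is $|\delta_a|\,q(a)$, which is \emph{linear} in the distance to the diagonal and cannot satisfy this; so you cannot simply borrow the paper's $q$. You would need a $\Phi$ vanishing to fourth order (or more) near the diagonal, and you would then still have to verify the averaging inequality in the interior, in particular near the NAND fixed point $a=b=1-x_0$, where the first-order bias passthrough is exactly $1$ at $\epsilon=\e$ and the inequality is forced to be an equality to leading order. That is a delicate two-parameter verification that you explicitly defer; your candidate form $\Phi(x,y)=|G(x)-G(y)|$ is only compatible with the constant-input constraint if $G$ itself vanishes to order $\ge 3.55$ at $0$, which is a nonstandard profile and certainly not something one gets "for free".

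So: the Kraft framing is attractive (it would bound the number of relevant variables directly, without the paper's choose-a-deep-$x_i$ step) and could in principle yield a cleaner statement; but as written the proposal has a genuine gap — no $\Phi$ is exhibited, the required inequality is strictly stronger than the one the paper proves, and the constant-input case already rules out the most natural candidates, including the paper's own potential. Until a $\Phi$ satisfying your averaging inequality on all of $[-1,1]^2$ is produced and verified, the argument does not go through.
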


Together with the first mentioned result from \cite{ep:noisy2} this gives the exact threshold for formulas with gates of fan-in 2. It extends the second result from  \cite{ep:noisy2} in the following ways: (1) We allow all gates of fan-in 2, instead of only NAND-gates. (2) We prove that if the noise is exactly $\e$, then no universal bounded-error computation is possible. (3) In contrast to our result, the upper bound in \cite{ep:noisy2} only applies to ``soft'' inputs. They show that gates with noise more than $\e$ cannot increase the bias. More precisely, if the inputs to the formula are noisy themselves and have bias at most $\Delta>0$, then the output of the formula cannot have larger bias than $\Delta$. This left open the case where the input bits are not noisy and either $0$ or $1$, which is the case we care about most. Our argument shows that even with perfect inputs fault-tolerant computation is not possible for noise at least $\e$.

To prove Theorem \ref{the:main} we introduce a new technique, which is also applicable in the case of fan-in $2$. We expect that it can be extended to other fan-in cases.

We conjecture that our bound also holds for circuits.

\subsection{Outline of the proof}
\label{sec:outline}
For any function $f : \{0,1\}^n \rightarrow \{0,1\}$ we will choose an input bit $x_i$ which $f$ depends on, and fix all other bits such that $f$ still depends on $x_i$ . Assume that $f$ is computed by a formula $F$ with noisy gates that fail independently with probability at least $\beta_2$. 
Then, for each gate in the formula $F$ with input wires $A$ and $B$ and output wire $C$ we can define $a=\frac{1}{2}\mathbb{P}[A=0 \mid x_i=0]+\frac{1}{2}\mathbb{P}[A=0 \mid x_i=1]$ and $\delta_a=\mathbb{P}[A=0 \mid x_i=0]-\mathbb{P}[A=0 \mid x_i=1]$ and analogously for $B$ and $C$. The variable $a$ can be seen as the average probability of $A$ being $0$. We call $\delta_a$ the \emph{bias} of $A$ with respect to the two input settings $x_i=0$ and $x_i=1$. 

To prove our result one could attempt the following, which will turn out to not quite work (but we then show how to fix that): For an $\epsilon$-noisy gate with fan-in $2$, input wires $A$, $B$ and output wire $C$, we would like to show that if the noise $\epsilon$ is at least the threshold $\e$ then for any $\delta>0$ there is some $0\le \theta <1$ such that if $\delta\le\max\{ |\delta_a|,|\delta_b|\}$ then
\begin{equation}
\label{wrong}
|\delta_c| \le \theta \max\{|\delta_{a}|,|\delta_{b}|\}
\end{equation}
This would mean that the bias goes down exponentially with the number of computation steps, until it reaches $\delta$. Further, it is easy to show that for any $d>0$ there is a function $f$ such that any formula computing $f$ has one input bit $x_i$ on which $f$ depends and the number of computation steps on any path from $x_i$ to the output bit is at least $d$. Hence, the bias cannot be bounded away from zero for all $f$ and $x_i$. 


Unfortunately, (\ref{wrong}) is not always true. Sometimes the bias can actually go up.\footnote{An easy example is an OR-gate with noise $\epsilon=1/10$, $\delta_a=\delta_b=1/10$ and $a=b=8/10$, for which $\delta_c=(a\delta_b+b\delta_a)(1-2\epsilon)=0.128>1/10$.} 
We use a more sophisticated approach, showing that the bias goes down ``on average'': 
We define a \emph{potential function} $q$, which is positive and bounded on $[0,1]$. 
Instead of showing (\ref{wrong}) we show that for any $\delta>0$ there is some $0 \le \theta <1$ such that if $\delta\le\max\{ |\delta_{a}|, |\delta_b|\}$ then
\begin{equation}\label{eq:main}
|\delta_c|q(c) \le \theta \max\{|\delta_a|q(a),|\delta_b|q(b)\}.
\end{equation}
and if $\delta>\max\{ |\delta_{a}|, |\delta_b|\}$ then (\ref{eq:main}) holds for $\theta=1$.
Since $q$ is bounded, this implies that for any arbitrarily small constant $\delta>0$ the bias of any formula becomes $O(\delta)$ after a constant number of computation steps. We can then proceed as above.

We give the main proof in Section \ref{sec:mainproof}. In Section \ref{sec:proof} we prove (\ref{eq:main}), in the main Lemma \ref{lem:bias}. Section \ref{sec:discussion} contains  some remarks on our particular choice of $q$.

\section{Definitions}
\label{Sec:prelim}
A \emph{circuit} is composed of gates. Each \emph{gate} has a certain number of input wires, which is called the \emph{fan-in} of the gate. The wires can take boolean values $0$ or $1$. A gate computes an output bit as a boolean function of its input bits. A \emph{formula} is a particular type of circuit in which the gates are connected in a tree, with the output gate at the root and the input bits at the leaves. In particular, this means that each gate has exactly one output wire.  

A (perfect) PARITY-gate with input bits $x_1$ and $x_2$ outputs $0$ if $x_1=x_2$ and $1$ otherwise. A (perfect) OR-gate outputs $0$ if $x_1=x_2=0$ and $1$ otherwise. 

We call a gate \emph{$\epsilon$-noisy} if it outputs the correct result with probability $1-\epsilon$ and with probability $\epsilon$ it outputs the opposite.
We say that a formula $F$ with noisy gates computes the function $f$ with bias $\Delta>0$ if for all $x\in f^{-1}(0)$, $y\in f^{-1}(1)$: $\mathbb{P}[F(x)=0]\ge \Delta+ \mathbb{P}[F(y)=0]$.\footnote{For our purposes it does not matter that with this definition $f$ and $\bar f$ are actually computed by the same $F$.} 
If $f$ can be computed with some bias $\Delta>0$ we also say that $f$ is \emph{computable with bounded-error}. 

A function $f: \{0,1\}^n \rightarrow \{0,1\}$ \emph{depends}  on the $i$-th input bit $x_i$ if there is some setting of the other bits, such that flipping $x_i$ flips the function value. The number of bits that $f$ depends on is denoted by $d(f)$.

In a formula, we define the \emph{depth} of a wire $A$, denoted by $depth(A)$, as the number of 2-input gates on a path from $A$ to the output wire. Gates with fan-in $1$ are not counted.

For the definition of the quantities $a$ and $\delta_a$ for a wire $A$ we refer to Section \ref{sec:outline}. 
\section{Bias reduction for noisy gates}
\label{sec:proof}
We define the constant $x_0=1/(2-4\e)=(1 + \sqrt{7})/6\approx 0.61$. It will turn out later that an OR-gate with input wires $A,B$ performs best when $a \approx x_0$ and $b \approx x_0$. 
 Further, we define the \emph{potential function} 
\begin{eqnarray}
\nonumber q(x)&=&\left(\tfrac{29}{2}+2 \sqrt{7}\right)\left(x-\tfrac{1}{2}\right)^4\\
\label{q}&&~~+\left(\tfrac
   {5 \sqrt{7}}{2}-{ \tfrac{13}{4}} \right)
   \left(x-\tfrac{1}{2}\right)^2-\tfrac{\sqrt{7}}{2}+\tfrac{
   73}{32}\\
\nonumber &\approx&\text{19.79} (x-\text{0.5})^4+\text{3.36}
   (x-\text{0.5})^2+\text{0.96}.
\end{eqnarray}
\begin{figure}
\label{fig}
\center\includegraphics[width=8.5cm]{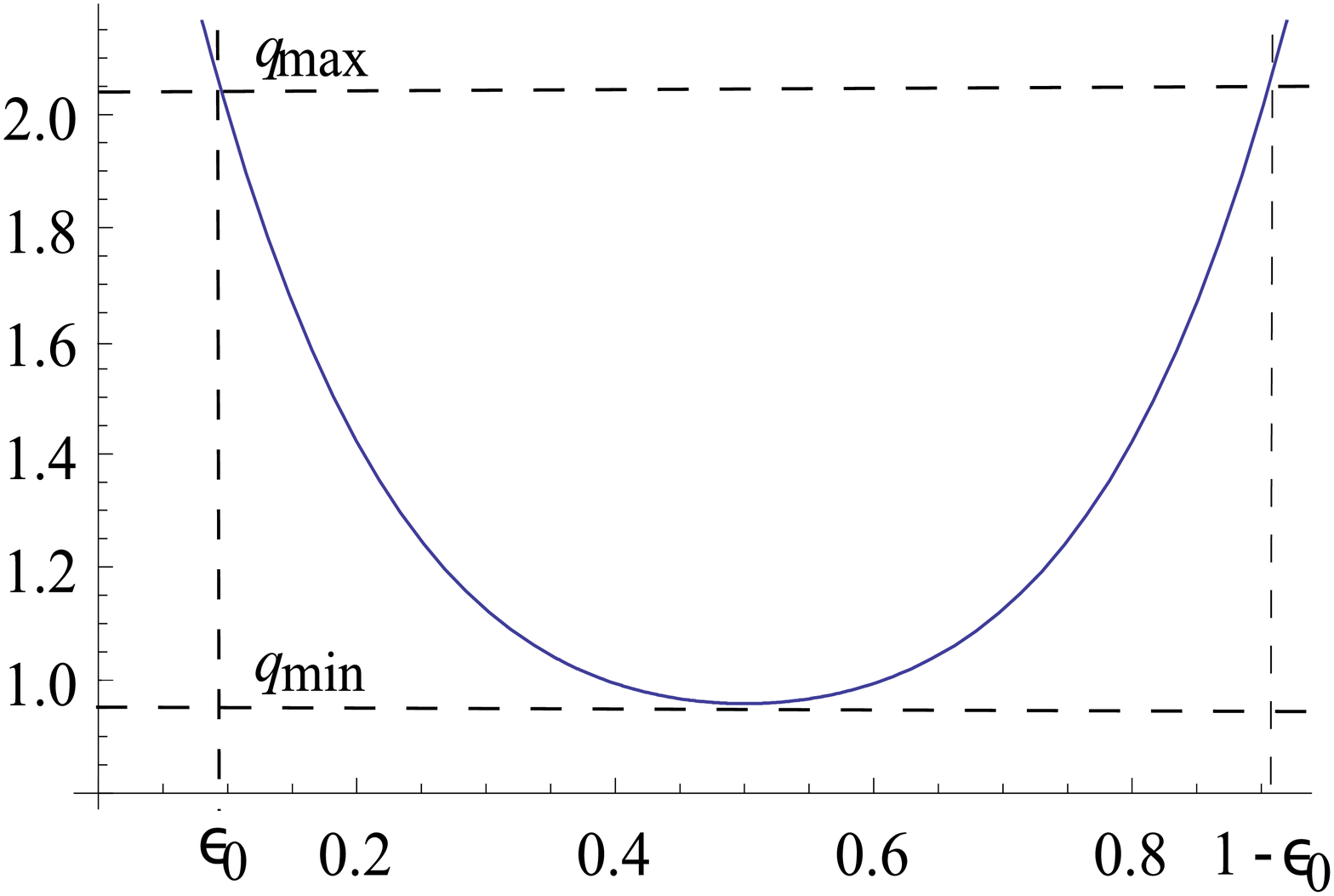}
\caption{Graph of $q(x)$}
\end{figure}
This is a biquadratic function in $(x-1/2)$. Further, $q$ is symmetric around $1/2$ and convex. In $[\e,1-\e]$ $q$ is bounded between $q_{min}=q(1/2)=-\sqrt{7}/{2}+{73}/{32}>0.9$ and $q_{max}=q(\e)=(247 + 8\sqrt{7})/128 <2.1$, see Figure \ref{fig}.

For any $\epsilon \le 1/2$ we define the function
$$\eta_\epsilon(x)=(1-2\epsilon)x + \epsilon.$$
If $x$ is the probability that some variable is $0$, then $\eta_\epsilon(x)$ is the probability that it is $0$ after it has gone through an $\epsilon$-noisy channel.

\subsection{Technical Lemmas}
In the rest of this section we establish inequality (\ref{DPI}) in Lemma \ref{lem:bias}, from which the  proof of the main theorem will follow relatively straightforwardly. The proof of this inequality is quite technical and so at first reading the reader might just want to read the statement of Lemma \ref{lem:bias} and then move immediately to Section \ref{sec:mainproof}, where we establish the main result.
Inequality (\ref{DPI})  can also be checked with the help of a computer (e.g. using Mathematica \cite{m}), but in the remainder of this section we will prove it rigorously.


\begin{myprop}\label{prop:1}
For all $a,b$ with $\e \le a,b \le 1-\e$ it holds that
\begin{eqnarray}
\label{one0}
\nonumber&q(a)q(b) - (1 - 2\e)(aq(a) + bq(b))q(\eta_\e(ab))&\\
&\ge 0.~~~~~~~~~~~~~~~~~~~~~~~~~~~~~~~~~~~~~~~~~&
\end{eqnarray}
\end{myprop}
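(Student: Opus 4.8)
The plan is to treat the left-hand side as a function of the two variables $a,b$ on the square $[\e,1-\e]^2$ and show it is nonnegative there by exploiting the explicit biquadratic form of $q$. First I would substitute $u=a-\tfrac12$, $v=b-\tfrac12$, so that $q$ becomes an even quartic $q(a)=Q(u)=c_4u^4+c_2u^2+c_0$ with the three positive constants $c_4=\tfrac{29}{2}+2\sqrt7$, $c_2=\tfrac{5\sqrt7}{2}-\tfrac{13}{4}$, $c_0=\tfrac{73}{32}-\tfrac{\sqrt7}{2}$, and $ab=(\tfrac12+u)(\tfrac12+v)$, $\eta_\e(ab)=(1-2\e)ab+\e$. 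The key algebraic fact to pin down is a clean formula for $\eta_\e(ab)-\tfrac12$: writing $t:=1-2\e=(\sqrt7-1)/2$, one gets $\eta_\e(ab)-\tfrac12 = t\bigl(ab-\tfrac12\bigr)+\e-\tfrac14+\tfrac{t}{4}-\tfrac{t}{4}$; more usefully, since $x_0-\tfrac12 = \tfrac1{2t}-\tfrac12$, the point $a=b=x_0$ should come out as the minimizer, and I would verify that there the bracket in (\ref{one0}) vanishes to the right order. This motivates the change of variables and tells us the inequality is tight, so no slack can be thrown away cheaply.

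Next I would reduce the two-variable problem to something manageable. Because everything in sight is symmetric in $a$ and $b$ and polynomial, I would fix $b$ (equivalently $v$) and regard $g_b(a):=q(a)q(b)-t\,(aq(a)+bq(b))\,q(\eta_\e(ab))$ as a polynomial in $a$ of degree at most, say, $8$ (product of a quartic in $ab$ with a linear-times-quartic term). Two routes are available: (i) show $g_b$ is convex in $a$ on $[\e,1-\e]$ — plausible since $q$ is convex and the troublesome term enters with a minus sign but is itself a composition that is concave-ish in the relevant range — and then it suffices to check the minimum over $a$, reducing to a one-variable inequality in $b$; or (ii) more robustly, expand $g_b(a)$ in the basis $\{(a-x_0)^k\}$ and argue all the "dangerous" coefficients have the right sign, using that $x_0\in(\e,1-\e)$. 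I expect route (ii) to be the cleanest: after the shift to $a-x_0$, the constant and linear coefficients should vanish (tightness at $a=b=x_0$), the quadratic coefficient should be manifestly nonnegative using $q''>0$ and $q_{min}>0.9$, and the higher-order corrections should be dominated on the small interval $|a-x_0|\le 1-\e-x_0$.

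The main obstacle is precisely this last bookkeeping: the left-hand side is a genuine degree-$8$ polynomial inequality in two variables that is tight at an interior point, so I cannot get away with crude bounds — I need the cancellation at $(x_0,x_0)$ to be captured exactly and then control the remainder. Concretely, the hard part will be showing that the quadratic form obtained from the Hessian of the LHS at $(x_0,x_0)$ is positive semidefinite \emph{and} that the cubic/quartic tails, evaluated with the actual numerical constants involving $\sqrt7$, do not overcome the quadratic term anywhere on $[\e,1-\e]^2$. I would handle this by (a) computing the Hessian at $(x_0,x_0)$ in closed form and checking its trace and determinant are $\ge 0$, then (b) bounding each higher-degree monomial's coefficient in absolute value and each corresponding power of $(a-x_0),(b-x_0)$ by the interval half-width $1-\e-x_0\approx 0.30$, so that the whole remainder is provably smaller than the quadratic lower bound $\tfrac12\lambda_{\min}\bigl((a-x_0)^2+(b-x_0)^2\bigr)$. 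Throughout I would keep the constants exact (rational combinations of $\sqrt7$) so that each numeric comparison reduces to checking the sign of an expression of the form $p+q\sqrt7$ with $p,q\in\mathbb{Q}$, which is decidable by squaring; the paper's remark that the inequality "can also be checked with the help of a computer" suggests that a fully symbolic verification of these finitely many sign conditions is the intended fallback if the by-hand estimates get unwieldy.
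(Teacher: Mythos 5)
Your expansion around $(x_0,x_0)$ is the right starting point, and you correctly identify the crucial structure: the left-hand side is tight at $(x_0,x_0)$, the constant and linear coefficients in $(a-x_0,b-x_0)$ vanish by design of $q$, and the quadratic part has a positive coefficient. But there is a genuine gap in your plan to control the remainder over the whole square $[\e,1-\e]^2$.

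You propose to bound each power of $(a-x_0)$ and $(b-x_0)$ by the interval half-width $1-\e-x_0\approx 0.30$ and then argue the whole tail is dominated by the quadratic lower bound. This cannot work. (As a side note, the left-hand side is a polynomial of total degree $13$ in $(a,b)$, not $8$: $q(\eta_\e(ab))$ is already degree $4$ in each of $a$ and $b$, and it is multiplied by $aq(a)+bq(b)$.) Its Taylor coefficients about $(x_0,x_0)$ reach into the thousands; in the paper's one-variable reduction the higher-order coefficients are only bounded by $|r_i(k)|\le 5000$. Against that, the positive quadratic contribution is of order $1\cdot(a-x_0)^2$, which near the edge of the square ($|a-x_0|\approx 0.30$) is about $0.09$, so even a single cubic term with coefficient $5$ swamps it, let alone the degree-$13$ tail. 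In fact the paper's Taylor estimate survives only in the tiny neighborhood $|a-x_0|,|b-x_0|\le 1/50$, where the tail contributes at most about $0.12\,s_b^2$ against a leading term $\ge 1.02\,s_b^2$; for the rest of the square the paper invokes a separate, \emph{non-tight} polynomial inequality (Fact \ref{fact:1}) verified by numerical global optimization, with explicit slack $0.0003$. That split --- a local Taylor argument near the point of tightness, plus a global computer-assisted bound where there is slack to spare --- is exactly what your proposal is missing. The ``symbolic verification fallback'' you mention in passing must in fact carry the bulk of the domain.

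Two smaller remarks. First, your initial substitution $u=a-\tfrac12$, $v=b-\tfrac12$ is a detour; the correct center is $(x_0,x_0)$, as you realize mid-proposal. Second, the paper sidesteps the two-variable Hessian-plus-remainder bookkeeping with a neat device: writing $a=x_0+s_a$, $b=x_0+s_b$, assuming $|s_b|\ge|s_a|$ without loss of generality, and setting $s_a=ks_b$ with $|k|\le 1$, the left-hand side becomes a single power series $\sum_{i\ge 0} r_i(k)\,s_b^{i+2}$ in $s_b$ whose coefficients are polynomials in $k$ bounded on $[-1,1]$. The leading coefficient $r_0(k)=(3-\tfrac{3\sqrt{7}}{4})(k^2+1)\ge 3-\tfrac{3\sqrt{7}}{4}>1$ is precisely your positive-definite Hessian in disguise, but in a form that makes bounding all tail coefficients at once considerably cleaner than computing a $2\times 2$ Hessian and separately estimating a genuine two-variable remainder.
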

\begin{proof}
We write $a=x_0+s_a$ and $b=x_0+s_b$. W.l.o.g.\ let $|s_b|\ge|s_a|$ and choose $-1 \le k \le 1$ s.t. $s_a=ks_b$. Then the lhs of (\ref{one0}) can be written as 
\begin{equation}
\label{taylor}
\sum_{i=0}^{11}r_i(k)s_b^{i+2}. 
\end{equation}
The reason why (\ref{taylor}) only starts with a quadratic term in $s_b$ is our special choice of $q$, see Section \ref{choice} for more on this. The first coefficient is easily computed
\begin{eqnarray*}
r_0(k)=\left(3-\tfrac{3\sqrt{7}}{4}\right) \left(k^2+1\right).
\end{eqnarray*}
This function attains its minimum value of $3-3\sqrt{7}/4\approx 1.02$ at $k=0$. Therefore, there is a $\kappa>0$ s.t.\ for $a,b \in [x_0-\kappa,x_0+\kappa]$ the lhs of (\ref{one0}) is non-negative. We show that $\kappa=0.02$ is a solution.

The absolute value of the other coefficients for $-1\le k \le 1$ can be bounded by $|r_1(k)|\le 5$, $|r_2(k)|\le 31$, $|r_3(k)|\le 18$, $|r_4(k)|\le 68$, $|r_5(k)|\le 326$ and for all other $|r_i(k)|\le 5000$. Therefore, if $|s_b|\le 1/50$, (\ref{taylor}) is at least 
\begin{eqnarray*}
\begin{array}{l}
s_b^2\left(1.02-5 (0.02)-31(0.02)^2-18(0.02)^3-\dots\right)\\
~~~\ge 0.90s_b^2\ge 0.
\end{array} 
\end{eqnarray*}

This proves the case $x_0-1/50 \le a,b \le x_0+1/50$. For all other $|a-x_0|\ge 1/50$ or $|b-x_0|\ge 1/50$ the proposition follows from Fact \ref{fact:1} with $\mu=0$.
\end{proof}

We now state some bounds on polynomials. They are similar in spirit to (\ref{one0}), with the crucial difference that these bounds are not tight. This is convenient, because there are several techniques for finding global optima of multivariate polynomials up to arbitrary precision. See \cite{polynomials} for an overview.
 We have used the computer algebra program Mathematica \cite{m}. We used an accuracy of $10^{-10}$ and rounded the results in such a way that the bounds given \emph{are rigorous}.\footnote{Even more simple, one could bound the first derivatives and check all values of the polynomials on a small enough grid.} 
\begin{myfact}\label{fact:1}
 For all $a,b$ with $\e \le a,b \le 1-\e$  with $|a-x_0|\ge 1/50$ and $0 \le \mu \le \xi: = (1-\e-a)(1-\e-b)$ it holds that
\begin{equation}
\label{one'} 
\begin{array}{l}q(a)q(b) - (1 - 2\e)(aq(a) + bq(b))q(\eta_\e(ab+\mu)) \\
> 0.0003.
\end{array}
\end{equation}
\end{myfact}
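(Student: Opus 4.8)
The statement to prove is Fact~\ref{fact:1}: a strict lower bound of $0.0003$ on the polynomial expression
$$q(a)q(b) - (1-2\e)(aq(a)+bq(b))\,q(\eta_\e(ab+\mu))$$
over the box $\e \le a,b \le 1-\e$, $|a-x_0|\ge 1/50$, $0 \le \mu \le (1-\e-a)(1-\e-b)$. Below is how I would approach it.

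\medskip

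\textbf{Plan.} The key observation is that, since $q$ is a fixed degree-$4$ polynomial and $\eta_\e$ is affine, the expression $G(a,b,\mu) := q(a)q(b) - (1-2\e)(aq(a)+bq(b))q(\eta_\e(ab+\mu))$ is a genuine polynomial in the three real variables $a,b,\mu$ (of modest degree: degree $5$ in each of $a,b$ individually, degree $4$ in $\mu$, with all coefficients lying in $\mathbb{Q}[\sqrt 7]$ and hence exactly computable). The feasible region $R = \{(a,b,\mu) : \e\le a,b\le 1-\e,\ |a-x_0|\ge 1/50,\ 0\le\mu\le(1-\e-a)(1-\e-b)\}$ is a closed bounded semialgebraic set, so $G$ attains its minimum on $R$; it suffices to certify that this minimum exceeds $0.0003$. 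I would do this by a standard global-optimization/interval argument: first bound the partial derivatives $\partial_a G$, $\partial_b G$, $\partial_\mu G$ in absolute value by an explicit constant $M$ on the (slightly enlarged) box $[\e,1-\e]^2 \times [0,\xi_{\max}]$ with $\xi_{\max} = (1-2\e)^2$, using that each is again an explicit polynomial whose coefficients are known; then evaluate $G$ on a finite grid of mesh $h < (\min_R G_{\text{grid}} - 0.0003)/(M\sqrt 3)$ (together with the boundary pieces $|a-x_0| = 1/50$, $\mu = 0$, $\mu = \xi$, and $a,b\in\{\e,1-\e\}$, which are lower-dimensional and handled the same way), and conclude by the mean value inequality that $G > 0.0003$ everywhere on $R$. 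This is exactly the procedure the authors describe in their footnote ("bound the first derivatives and check all values of the polynomials on a small enough grid"), and Mathematica's \texttt{NMinimize}/\texttt{Minimize} gives the same certificate; the $10^{-10}$ accuracy they mention is far finer than the $0.0003$ gap we need, so rounding conservatively keeps the bound rigorous.

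\medskip

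\textbf{Key steps, in order.} (1) Expand $G(a,b,\mu)$ symbolically and record its coefficients in $\mathbb{Q}[\sqrt 7]$; verify it is indeed of the claimed low degree so that the grid argument is quantitatively feasible. (2) On the box $B = [\e,1-\e]^2\times[0,(1-2\e)^2]$, which contains $R$, bound $|\partial_a G|,|\partial_b G|,|\partial_\mu G|$ by summing absolute values of monomial contributions, obtaining an explicit Lipschitz constant $M$ for $G$ in the sup-metric (this is crude but sufficient). (3) Partition $R$ — or rather the enclosing box $B$ — into cells of diameter $< \rho$ with $\rho$ chosen so $M\rho$ is comfortably below, say, $0.0005$; evaluate $G$ at the center of each cell (using exact or safely-rounded arithmetic) and check each value exceeds $0.0008$. (4) Separately handle the faces of $R$ where inequalities in the definition of $R$ become equalities ($a = x_0\pm 1/50$; $a,b\in\{\e,1-\e\}$; $\mu=0$ already covered, $\mu=\xi$ a two-variable polynomial inequality), which are lower-dimensional instances amenable to the same grid-plus-derivative-bound method, now in one or two variables. (5) Combine: every point of $R$ is within $\rho/2$ (say) of a grid point where $G > 0.0008$, and $G$ changes by at most $M\rho/2 < 0.0005$ over that distance, so $G > 0.0003$ throughout $R$.

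\medskip

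\textbf{Main obstacle.} The conceptual content here is nil — the real work is making the numerical certificate genuinely rigorous rather than merely "Mathematica says so." The delicate points are: (i) computing a correct and not-too-pessimistic derivative bound $M$, since an overly large $M$ forces an impractically fine grid; (ii) controlling round-off in the grid evaluations so that the inequality $G(\text{grid point}) > 0.0008$ is certified and not just numerically plausible (here working in $\mathbb{Q}[\sqrt7]$ exactly, or with interval arithmetic, is the clean fix); and (iii) the fact that $R$ is not a box but is cut out by the constraints $|a-x_0|\ge 1/50$ and $\mu \le \xi(a,b)$, so the enumeration must either work on the enclosing box $B$ (wasteful but safe, since the bound is not tight and we have slack) or correctly enumerate the boundary faces. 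Given the comfortable $0.0003$ margin — in contrast with the tight Proposition~\ref{prop:1}, where a Taylor expansion around $x_0$ was unavoidable — the brute-force box approach is the path of least resistance, and I expect no genuine difficulty beyond careful bookkeeping.
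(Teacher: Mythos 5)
There is a genuine flaw in step (3)/(5) of your plan: you propose to grid over the enclosing box $B = [\e,1-\e]^2\times[0,\xi_{\max}]$ and call this ``wasteful but safe,'' but it is not safe, because the inequality is \emph{false} on $B\setminus R$. The constraint $|a-x_0|\ge 1/50$ is doing real work: at $a=b=x_0$ and $\mu=0$ the left-hand side of (\ref{one'}) is exactly zero (this is precisely the degenerate point around which Proposition~\ref{prop:1} Taylor-expands, with leading term $r_0(k)s_b^2$ vanishing at $s_b=0$), and more generally for $a,b$ within about $0.01$ of $x_0$ and $\mu=0$ the value is on the order of $s_b^2\approx 10^{-4}<0.0003$. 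So any grid point of $B$ landing near $(x_0,x_0,0)$ will fail the test $G>0.0008$, and conversely a point of $R$ just on the allowed side of $|a-x_0|=1/50$ may have its nearest grid neighbor on the forbidden side where $G$ is small, breaking the mean-value argument in step (5). You must carry out the grid certification over $R$ itself, with the boundary $|a-x_0|=1/50$ treated as a genuine face; your step (4) already lists the right faces, but you cannot fall back to the box as you suggest.

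Setting that aside, your route differs from the paper's in one other respect worth noting. You attack the full three-variable polynomial $G(a,b,\mu)$ directly, whereas the paper first observes that $\mu$ enters only through $q(\eta_\e(ab+\mu))$; since $q$ is convex and $\eta_\e$ affine, $q(\eta_\e(ab+\mu))\le\max\{q(\eta_\e(ab)),q(\eta_\e(ab+\xi))\}$, so the minimum of the left-hand side over $\mu\in[0,\xi]$ is attained at $\mu=0$ or $\mu=\xi$. This collapses the verification to two \emph{two}-variable polynomial inequalities on the set $\{\e\le a,b\le 1-\e,\,|a-x_0|\ge 1/50\}$ (checked to be $\ge 0.0003$ and $\ge 0.01$ respectively). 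Your three-dimensional grid is a valid alternative once restricted to $R$, but the convexity reduction the paper uses is cheaper, removes the awkward $a,b$-dependent constraint $\mu\le\xi(a,b)$ entirely, and is reused verbatim in Facts~\ref{fact:2a}--\ref{fact:3}; it is worth incorporating.
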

\begin{proof}
Notice that $\mu$ only appears in the term $q(\eta_\e(ab+\mu))$. For $0 \le \mu \le \xi$ we notice that by convexity of $q$ and linearity of $\eta_\e$ it follows that $q(\eta_\e(ab+\mu))\le \max\{q(\eta_\e(ab)), q(\eta_\e(ab+\xi))\}$. Thus,   (\ref{one'}) is minimized for $\mu=0$ or $\mu=\xi$. For $\mu=0$ the lhs of (\ref{one'}) is lower bounded by $0.0003$ and for $\mu=\xi$ by $0.01$. 
\end{proof}

\begin{myfact}\label{fact:2a}
 For all $a,b,\mu$ with $\e \le b \le 1-\e$, $1/2 \le a \le 1-\e$ and $|\mu| \le \xi := 2(1-\e-a)(1-\e-b) $ it holds that
\begin{eqnarray*}
\label{fc1}
\begin{array}{cl}
q(a)q(b)-
((2a-1)q(a)+(2b-1)q(b))(1-2\e) \\~~~~~~~~~~~~~\times q(\eta_\e(ab +(1-a)(1-b) + \mu))\\
\ge  0.45.~~~~~~~~~~~~~~~~~~~~~~~~~~~~~~~~~~~~~~~~~~~ 
\end{array}
\end{eqnarray*}
\end{myfact}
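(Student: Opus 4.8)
The plan is to prove Fact \ref{fact:2a} in essentially the same way as Proposition \ref{prop:1} and Fact \ref{fact:1}: reduce the claimed inequality to a statement about a polynomial in a bounded box, and then certify the polynomial bound numerically (with rigorous rounding), as the paragraph preceding Fact \ref{fact:1} announces. First I would observe that, as in Fact \ref{fact:1}, the variable $\mu$ enters only through the single factor $q(\eta_\e(ab+(1-a)(1-b)+\mu))$. Since $q$ is convex and $\eta_\e$ is affine, the composition $\mu \mapsto q(\eta_\e(ab+(1-a)(1-b)+\mu))$ is convex on the interval $|\mu|\le \xi$, hence it attains its maximum at one of the two endpoints $\mu=\pm\xi$. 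Because the coefficient $((2a-1)q(a)+(2b-1)q(b))(1-2\e)$ multiplying this factor is non-negative on the stated domain ($1/2\le a\le 1-\e$ forces $2a-1\ge 0$, and one checks $2b-1$ contributes with the right sign together with the positivity of $q$ — or more simply, both $q$-values are positive and the prefactor times $q(\eta_\e(\cdot))$ is what we are subtracting), the left-hand side is minimized at $\mu=\xi$ or $\mu=-\xi$. So it suffices to verify the inequality at those two values of $\mu$, reducing to two inequalities in the two variables $a\in[1/2,1-\e]$, $b\in[\e,1-\e]$.

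Next I would substitute $\mu=\pm\xi=\pm 2(1-\e-a)(1-\e-b)$, so that the left-hand side becomes an explicit polynomial in $a$ and $b$ (the functions $q$ and $\eta_\e$ are polynomials, and $\e$, $x_0$ are fixed algebraic constants). Each of the two resulting polynomials has to be shown to be $\ge 0.45$ on the compact box $[1/2,1-\e]\times[\e,1-\e]$. I would do this exactly as advertised in the excerpt: run a global polynomial optimizer (Mathematica) to locate the minimum with accuracy $10^{-10}$, then round the reported value downward to a safe rigorous lower bound; alternatively, and fully by hand, bound the partial derivatives $\partial_a$, $\partial_b$ of each polynomial in absolute value over the box (they are themselves polynomials, so crude triangle-inequality bounds on their coefficients suffice), pick a grid fine enough that the oscillation of the function between adjacent grid points is below the safety margin (say $0.01$), and check the inequality with margin at every grid point. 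Since the claimed bound $0.45$ is far from tight, this leaves ample slack and the grid need not be very fine.

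The main obstacle, as with the other technical facts in this section, is not conceptual but bookkeeping: one must make sure the sign of the prefactor of the $\mu$-dependent term is genuinely non-negative on the whole domain so that the endpoint reduction is valid, and one must be careful that the numerical certification is carried out with honest rounding (round the computed minimum down, round derivative bounds up) so that the final inequality is rigorous rather than merely numerically plausible. Once the endpoint reduction and the two-variable polynomial bounds are in place, the Fact follows immediately.
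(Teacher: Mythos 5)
Your overall plan — observe that $\mu$ enters only through the single factor $q(\eta_\e(ab+(1-a)(1-b)+\mu))$, use convexity of $q$ together with linearity of $\eta_\e$ to reduce to the endpoints $\mu=\pm\xi$, and then certify the two resulting bivariate polynomial bounds numerically with honest rounding — is exactly the route the paper takes, which just cites ``convexity of $q$ as above'' and reports the endpoint values $0.48$ and $0.55$.

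However, there is a genuine gap in your justification of the endpoint reduction. You assert that the prefactor $P := \bigl((2a-1)q(a)+(2b-1)q(b)\bigr)(1-2\e)$ is non-negative on the whole box, and this is false: while $1/2 \le a$ does force $(2a-1)q(a)\ge 0$, the constraint on $b$ is only $\e \le b \le 1-\e$, so $2b-1$ can be negative, and for instance at $a=1/2$, $b=\e$ one has $P = (2\e-1)q(\e)(1-2\e) < 0$. When $P<0$ the map $\mu\mapsto q(a)q(b)-P\,q(\eta_\e(\cdot+\mu))$ is \emph{convex} rather than concave in $\mu$, so its minimum over $[-\xi,\xi]$ need not lie at an endpoint, and the reduction you rely on breaks down. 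Your hedging remarks (``one checks $2b-1$ contributes with the right sign\dots'', ``the prefactor times $q$ is what we are subtracting'') do not establish non-negativity. The fix is easy and worth stating explicitly: split on the sign of $P$. If $P\le 0$, the subtracted term is $\le 0$, so the left-hand side is at least $q(a)q(b)\ge q_{\min}^2 > 0.81 > 0.45$ and the claim is trivial. If $P>0$, the endpoint reduction is valid as you describe, and the two bivariate polynomial checks finish the argument. With that case-split inserted, your proposal is correct and matches the paper's method.
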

\begin{proof}
For $\mu = \xi$ the term is lower bounded by $0.48$ and for $\mu = -\xi$ by $0.55$. Using convexity of $q$ as above the fact follows.
\end{proof}

\begin{myfact}\label{fact:2b}
 For all $a,b,\mu$ with $\e \le b \le 1-\e$, $\e \le a \le 1/2$ and $|\mu| \le \xi := 2(a-\e)(1-\e-b) $ it holds that
\begin{eqnarray*}
\label{fc2}
\begin{array}{cl}
q(a)q(b)-
((1-2a)q(a)+(2b-1)q(b))(1-2\e)\\~~~~~~~~~~~~~~~\times q(\eta_\e(ab +(1-a)(1-b) + \mu))\\
\ge 0.48.~~~~~~~~~~~~~~~~~~~~~~~~~~~~~~~~~~~~~~~~~~~~ 
\end{array}
\end{eqnarray*}
\end{myfact}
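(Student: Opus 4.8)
The cleanest route is to deduce Fact~\ref{fact:2b} directly from Fact~\ref{fact:2a} using the symmetry of $q$ about $1/2$. Recall $q(x)=q(1-x)$, and note $\eta_\e(1-z)=(1-2\e)(1-z)+\e=1-\eta_\e(z)$, so that $q(\eta_\e(z))=q(\eta_\e(1-z))$ for every $z$. Now substitute $a=1-a'$ and $\mu=-\mu'$ in the inequality of Fact~\ref{fact:2a}. The hypotheses $1/2\le a\le 1-\e$ and $|\mu|\le 2(1-\e-a)(1-\e-b)$ become $\e\le a'\le 1/2$ and $|\mu'|\le 2(a'-\e)(1-\e-b)$, which are exactly the hypotheses of Fact~\ref{fact:2b}; moreover $q(a)=q(a')$ and $(2a-1)q(a)=(1-2a')q(a')$; and since $ab+(1-a)(1-b)+\mu=1-\bigl(a'b+(1-a')(1-b)+\mu'\bigr)$ we get $q(\eta_\e(ab+(1-a)(1-b)+\mu))=q(\eta_\e(a'b+(1-a')(1-b)+\mu'))$. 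Hence, term by term, the left-hand side of Fact~\ref{fact:2a} coincides with the left-hand side of Fact~\ref{fact:2b} (with $a',\mu'$ in place of $a,\mu$), and the lower bound established in the proof of Fact~\ref{fact:2a} carries over verbatim.

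\textbf{Alternative: a direct proof in the style of Fact~\ref{fact:2a}.} If one prefers not to route through the change of variables, the statement can be proved from scratch along the same lines. The variable $\mu$ occurs in the expression only inside the single factor $q(\eta_\e(ab+(1-a)(1-b)+\mu))$; since $q$ is convex and $\eta_\e$ is affine, this factor is a convex function of $\mu$ on $[-\xi,\xi]$ and is therefore maximized at $\mu=\xi$ or $\mu=-\xi$. One caveat (which, unlike in Fact~\ref{fact:1}, is not automatic) is that the coefficient $((1-2a)q(a)+(2b-1)q(b))(1-2\e)$ multiplying that factor need not be non-negative: although $1-2a\ge 0$ on the relevant range, $2b-1$ can be negative, e.g.\ for $b$ near $\e$ and $a$ near $1/2$. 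If that coefficient is $\le 0$, then, since $q>0$ everywhere, the subtracted term is $\le 0$ and the left-hand side is at least $q(a)q(b)\ge q_{min}^2>0.81>0.48$, so the claim is trivial. Otherwise the left-hand side is a constant minus a non-negative multiple of a function that is convex in $\mu$, hence concave in $\mu$, and its minimum over $\mu\in[-\xi,\xi]$ is attained at an endpoint.

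\textbf{Finishing and the main obstacle.} It then remains to treat the endpoint cases $\mu=\pm\xi$. Substituting $\xi=2(a-\e)(1-\e-b)$ turns the left-hand side into an explicit polynomial in $(a,b)$, which must be minimized over the box $\e\le a\le 1/2$, $\e\le b\le 1-\e$ and shown to be at least $0.48$. This is the only computational work, and it is carried out by rigorous polynomial optimization (or, more elementarily, by bounding the partial derivatives and checking a fine enough grid), with the conservative rounding already described in the paragraph preceding Fact~\ref{fact:1}. The only obstacle, such as it is, is keeping this numerical verification rigorous: the polynomial obtained after substituting $\mu=\pm\xi$ has fairly high degree (the quartic $q$ composed with quadratic expressions in $a,b$), so one needs either a tight derivative bound or a sufficiently fine grid — but there is no further conceptual difficulty, and in any case the symmetry reduction of the first paragraph sidesteps even this by re-using the verification done for Fact~\ref{fact:2a}.
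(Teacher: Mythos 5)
Your proof is correct, and your first route is genuinely different from the paper's. The paper proves Fact~\ref{fact:2b} exactly as it proves Fact~\ref{fact:2a}: it performs a second, independent numerical minimization at the two endpoints $\mu=\pm\xi$ (obtaining $0.51$ and $0.48$) and then invokes convexity of $q$ in $\mu$; your substitution $a\mapsto 1-a'$, $\mu\mapsto-\mu'$, together with $q(1-x)=q(x)$ and $\eta_\e(1-z)=1-\eta_\e(z)$, shows the two left-hand sides are literally the same function on corresponding domains (this mirrors the invariance of the PARITY analysis under negating one input), so the second optimization is redundant. What your approach buys is one fewer rigorous polynomial-optimization run and a transparent explanation of why the two facts are mirror images; what the paper's buys is self-containedness and independently computed constants. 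Two small points of care, both of which you essentially handle: (i) to reach the constant $0.48$ you must import the endpoint bounds from the \emph{proof} of Fact~\ref{fact:2a} ($0.48$ at $\mu=\xi$, $0.55$ at $\mu=-\xi$), since the \emph{statement} of Fact~\ref{fact:2a} only asserts $0.45$ and would yield a weaker conclusion than claimed -- you do say "the proof", so this is fine; and (ii) the sign caveat you raise for the coefficient $((1-2a)q(a)+(2b-1)q(b))(1-2\e)$ is real (the paper's "by convexity as above" silently assumes the coefficient is nonnegative, in both Facts~\ref{fact:2a} and~\ref{fact:2b}), and your fallback $q(a)q(b)\ge q_{min}^2>0.81$ when the coefficient is nonpositive closes it correctly. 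The apparent mismatch between the paper's endpoint constants ($0.51$ for Fact~\ref{fact:2b} at $\mu=\xi$ versus $0.55$ for the symmetric slice of Fact~\ref{fact:2a}) is only conservative rounding of the same minimum and does not affect either argument.
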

\begin{proof}
For $\mu = \xi$ the term is lower bounded by $0.51$ and for $\mu = -\xi$ by $0.48$. The fact then follows by convexity of $q$ as above.
\end{proof}

\begin{myfact}\label{fact:3}
Let $a,b,\mu$ with $\e \le a,b \le 1-\e$ Then 
\begin{eqnarray*}
q(a)-(1-2\e)bq(\eta_\e(ab-\mu)) &>& 0.22~~
\end{eqnarray*}holds
if (a) $a\le 1/2$ and  $-(a-\e)(1-\e-b) \le \mu \le 0$  or (b) $ 1/2 \le a $ and  $-(1-\e-a)(1-\e-b) \le \mu \le 0 $.
\end{myfact}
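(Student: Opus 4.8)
The plan is to follow the pattern of the proofs of Facts~\ref{fact:1}, \ref{fact:2a} and \ref{fact:2b}: remove the auxiliary variable $\mu$ by a convexity argument, and then certify the resulting finitely many two-variable polynomial inequalities by rigorous global polynomial optimization.

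First I would observe that $\mu$ enters the expression only through the single term $q(\eta_\e(ab-\mu))$, which carries the non-positive coefficient $-(1-2\e)b$ (here $1-2\e>0$ and $b\ge\e>0$). Because $q$ is convex on all of $\mathbb{R}$ — it equals $c_1(x-\tfrac{1}{2})^4+c_2(x-\tfrac{1}{2})^2+c_3$ with $c_1=\tfrac{29}{2}+2\sqrt{7}>0$ and $c_2=\tfrac{5\sqrt{7}}{2}-\tfrac{13}{4}>0$ — and because $\mu\mapsto\eta_\e(ab-\mu)$ is affine, the map $\mu\mapsto q(\eta_\e(ab-\mu))$ is convex, so $\mu\mapsto q(a)-(1-2\e)b\,q(\eta_\e(ab-\mu))$ is concave in $\mu$. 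A concave function on a closed interval attains its minimum at an endpoint, so it suffices to verify the claimed bound at the two endpoints of the $\mu$-interval: $\mu=0$ and $\mu=-(a-\e)(1-\e-b)$ in case (a), and $\mu=0$ and $\mu=-(1-\e-a)(1-\e-b)$ in case (b).

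At $\mu=0$ this is the two-variable inequality $q(a)-(1-2\e)b\,q(\eta_\e(ab))>0.22$ on the box $\e\le a,b\le1-\e$, further restricted to $a\le\tfrac{1}{2}$ or $a\ge\tfrac{1}{2}$. At the nonzero endpoint I would substitute the corresponding value of $\mu$: e.g.\ in case (a), $ab-\mu=ab+(a-\e)(1-\e-b)$, a polynomial in $a,b$, so that $q(a)-(1-2\e)b\,q\bigl(\eta_\e(ab+(a-\e)(1-\e-b))\bigr)>0.22$ is again a two-variable polynomial inequality on a box, and similarly in case (b) with $ab+(1-\e-a)(1-\e-b)$ as the argument. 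Each of these is of exactly the same type as the inequalities in Facts~\ref{fact:1}, \ref{fact:2a}, \ref{fact:2b}, and I would discharge them in the same way: a Mathematica global minimization with the output rounded in the safe direction, or, more elementarily, by bounding the partial derivatives on the box and checking the values on a sufficiently fine grid.

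The difficulty here is organizational rather than mathematical: one must keep the two case splits and their differing $\mu$-ranges straight, and one must make the polynomial-optimization step genuinely rigorous (correct rounding direction and a fine enough grid). Since $q$ is convex everywhere, no restriction on the range of the argument $\eta_\e(ab-\mu)$ is needed for the endpoint reduction to be valid. A quick sanity check near the apparent worst corner (roughly $a\approx\e$) gives values well above $0.22$, so there is ample slack to absorb the rounding.
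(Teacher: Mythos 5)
Your proposal matches the paper's proof exactly: the paper likewise reduces via convexity of $q$ to the two endpoints of the $\mu$-interval (reporting a lower bound of $0.23$ at $\mu=0$ and $0.22$ at $\mu=-(a-\e)(1-\e-b)$ resp.\ $\mu=-(1-\e-a)(1-\e-b)$) and then certifies the resulting two-variable polynomial inequalities numerically. Your explicit observation that $q$ is globally convex (both coefficients $\tfrac{29}{2}+2\sqrt{7}$ and $\tfrac{5\sqrt{7}}{2}-\tfrac{13}{4}$ are positive), so no range-restriction on the argument of $q$ is needed, is a helpful point the paper leaves implicit.
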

\begin{proof}
For $\mu=0$ (and all $\e \le a \le 1-\e$) the term is lower bounded by $0.23$. For both cases $a \le 1/2$ ,  $\mu=-(a-\e)(1-\e-b)$  and $ 1/2 \le a $, $\mu= -(1-\e-a)(1-\e-b)  $ the term is lower bounded by  $0.22$. Using convexity of $q$ as above the fact follows.
\end{proof}

We can state our main Lemma.
\begin{mylemma}
\label{lem:bias}
Let $\e \le \epsilon \le 1/2$.
 Assume an $\epsilon$-noisy OR-gate or PARITY-gate, with input wires $A$ and $B$ and output wire $C$. Let
\begin{eqnarray}
\label{probs}
\begin{array}{rcccl}
 \e &\le &\mathbb{P}[A=0 \mid x_i=0]& \le& 1-\e\\
\e& \le &\mathbb{P}[A=0 \mid x_i=1]& \le& 1-\e,
\end{array}
\end{eqnarray}
and let the same be true for $B$.
Define $a,b,c$ and $\delta_a,\delta_b,\delta_c$ for $A,B,C$ as in Section \ref{sec:intro}.
\begin{enumerate}
 \item\label{equal}
The following inequality holds for $\theta=1$:
\begin{equation}
\label{DPI}
|\delta_c|q(c) \le \theta \max\{|\delta_a|q(a),|\delta_b|q(b)\}.
\end{equation}
\item\label{strict} For any $\delta>0$ there is a $0\le \theta <1 $ such that if  $|\delta_a|\ge\delta$ or $|\delta_b|\ge\delta$, (\ref{DPI}) is still true for this $\theta$.
\end{enumerate}
\end{mylemma}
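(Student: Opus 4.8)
The plan is to establish both parts of Lemma~\ref{lem:bias} by first writing down explicit formulas for $c$ and $\delta_c$ in terms of $a,b,\delta_a,\delta_b$ for each of the two gate types, and then reducing the desired inequality~(\ref{DPI}) to the polynomial bounds already collected in Proposition~\ref{prop:1} and Facts~\ref{fact:1}--\ref{fact:3}. For an $\epsilon$-noisy PARITY-gate, if $p_A=\mathbb{P}[A=0\mid x_i=0]$ etc., then conditioned on $x_i=0$ the output is $0$ with probability $\eta_\epsilon\!\big(p_Ap_B+(1-p_A)(1-p_B)\big)$, and similarly for $x_i=1$; averaging and subtracting gives $c$ and $\delta_c$. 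For an $\epsilon$-noisy OR-gate the analogous expression uses $\eta_\epsilon(pq)$ in place of $\eta_\epsilon(pq+(1-p)(1-q))$, since OR outputs $0$ only when both inputs are $0$. In both cases $\delta_c$ comes out as $(1-2\epsilon)$ times a bilinear-in-$(\delta_a,\delta_b)$ expression whose coefficients involve $a,b$ (and, for PARITY, also $\eta$ applied to a point that depends on $a,b$ and on a correction term controlled by the products $\delta_a\delta_b$).

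First I would handle the monotonicity in $\epsilon$: since all the target inequalities are (after clearing the positive factor $q(\cdot)$) of the form ``something $\ge 0$'' and the role of $\epsilon$ enters only through $1-2\epsilon\in[0,\e]$ and through $\eta_\epsilon$, I would argue that it suffices to prove everything at the extreme value $\epsilon=\e$; larger noise only shrinks $|\delta_c|$ and moves $c$ toward $1/2$ where $q$ is smallest, both of which help. (This is exactly why the Facts are all stated with the constant $\e$ baked in.) Next, by the symmetry of the whole setup under $A\leftrightarrow B$ and under the simultaneous flips $p\mapsto 1-p$, $\delta\mapsto-\delta$, I would reduce to the case $|\delta_b|q(b)=\max\{|\delta_a|q(a),|\delta_b|q(b)\}$ and to sign-normalized ranges of $a$ (e.g.\ splitting at $a=1/2$), which matches the case splits in Facts~\ref{fact:2a}--\ref{fact:3}. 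Then the inequality $|\delta_c|q(c)\le\max\{|\delta_a|q(a),|\delta_b|q(b)\}$ becomes: $(1-2\e)\,|\,(\text{bilinear in }\delta_a,\delta_b)\,|\;q(c)\le |\delta_b|q(b)$, and since $|\delta_a|\le |\delta_b|q(b)/q(a)$ I can bound $|\delta_a|$ by $|\delta_b|q(b)/q(a)$, pull out $|\delta_b|$, and what remains is a polynomial inequality in $a,b$ with a bounded ``slack'' parameter $\mu$ accounting for the $\delta_a\delta_b$ correction inside $\eta_\e$ — precisely the form of Proposition~\ref{prop:1} (OR-gate, dominant term $aq(a)+bq(b)$) and Facts~\ref{fact:2a},~\ref{fact:2b} (PARITY, the $2a-1$ / $1-2a$ coefficients) and Fact~\ref{fact:3} (the degenerate case where one of $a,b$ sits at the boundary so only one term survives). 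For part~\ref{strict}, the point is that Proposition~\ref{prop:1} and the Facts give a strictly positive gap ($0.0003$, $0.45$, $0.48$, $0.22$, etc.), not just $\ge 0$; so when $|\delta_b|\ge\delta$ this uniform gap translates into $|\delta_c|q(c)\le(1-\rho)\max\{\dots\}$ for some $\rho=\rho(\delta)>0$, giving $\theta=1-\rho<1$.

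The main obstacle I anticipate is organizing the case analysis so that every $(a,b)$ regime is actually covered by exactly one of the available Facts with the correct $\mu$-range, and in particular verifying that the correction term hidden inside $\eta_\e(ab\pm\text{stuff})$ for the PARITY-gate really does lie in the interval $[-\xi,\xi]$ claimed in Facts~\ref{fact:2a}--\ref{fact:3}; this requires bounding $|\delta_a\delta_b|$ (and similar cross terms) in terms of $(1-\e-a)(1-\e-b)$ or $(a-\e)(1-\e-b)$ using only the hypothesis~(\ref{probs}) that the underlying conditional probabilities lie in $[\e,1-\e]$ — i.e.\ $|\delta_a|\le \min\{2(a-\e),2(1-\e-a)\}$ and likewise for $b$. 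Once that bookkeeping is in place, the rest is substitution: choose the Fact matching the current sign pattern and range, set $\mu$ to the actual correction, invoke the stated bound, and divide back through by $q(c)>q_{min}>0$. I would also note explicitly that a noisy gate realizing any of the $16$ two-input Boolean functions is, up to relabeling/negating inputs and output (which only flips signs of the $\delta$'s and replaces some $a$ by $1-a$, leaving $|\delta|q$ invariant by symmetry of $q$ and leaving $(\ref{DPI})$ unchanged), either constant (in which case $\delta_c=0$ and the claim is trivial), a single-input function (handled by the fan-in-$1$ case, again with $|\delta_c|q(c)\le|\delta_a|q(a)$ following from $q$ being constant along $a\mapsto 1-a$ together with the $(1-2\epsilon)$ contraction), an OR-type gate, or a PARITY-type gate — so treating OR and PARITY as in the Lemma suffices.
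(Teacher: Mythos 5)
Your overall architecture matches the paper's (explicit formulas for $c,\delta_c$, monotonicity in $\epsilon$ to fix $\epsilon=\e$, sign/symmetry normalizations, replace the non-dominant $\delta$ by $\delta_{\mathrm{dom}}\,q(\mathrm{dom})/q(\cdot)$, reduce to polynomial bounds). But there is a genuine gap in your argument for part~\ref{strict}. You claim that ``Proposition~\ref{prop:1} and the Facts give a strictly positive gap''---this is false for Proposition~\ref{prop:1}: it asserts only $\ge 0$, and it is \emph{tight} at $a=b=x_0$ (indeed $r_0(0)\,s_b^2\to 0$ there). The Facts with positive gaps ($0.0003$, $0.45$, etc.) all carry hypotheses that exclude a neighborhood of $(x_0,x_0)$: Fact~\ref{fact:1} needs $|a-x_0|\ge 1/50$, and Facts~\ref{fact:2a}--\ref{fact:3} are for PARITY or for the $\delta_b\le 0$ branch. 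So in the regime $|a-x_0|<1/50$, $|b-x_0|<1/50$, $\delta_a\delta_b>0$ for the OR-gate, your reduction produces no $\theta<1$; you only get $\theta=1$.

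The paper closes exactly this hole with an extra step you are missing: it exploits the positive correction $\delta_a\delta_b/4$ inside $\eta_\e$. Concretely, it shows (inequality~(\ref{better})) that $q(\eta_\e(ab+\delta_a\delta_b/4))<q(\eta_\e(ab))-(1-2\e)\delta_a\delta_b/20$ in this regime, so plugging into Proposition~\ref{prop:1} (stated at $\mu=0$) yields a \emph{strict} gap proportional to $\delta_a\delta_b$. This still isn't enough on its own, because $\delta_b$ could be tiny even when $\delta_a\ge\delta$; the paper therefore splits further into $\delta_b\ge\delta_a/100$ (where $\delta_a\delta_b$ is bounded below by a constant times $\delta^2$) and $\delta_b<\delta_a/100$ (where the substitution $\delta_b\mapsto\delta_a q(a)/q(b)$ was so lossy that one can instead use $\delta_a q(a)/(10q(b))$, yielding inequality~(\ref{special}) with built-in slack). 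You would need to reproduce both of these sub-arguments to make part~\ref{strict} go through near $(x_0,x_0)$. A smaller related point for part~\ref{equal} in the same regime: Proposition~\ref{prop:1} has $q(\eta_\e(ab))$, not $q(\eta_\e(ab+\delta_a\delta_b/4))$, so you still need the qualitative observation that the correction pushes the argument toward $1/2$ and hence can only decrease $q$; the paper verifies this does not ``overshoot'' past $1/2$ by more than the initial distance by bounding $(1-2\e)\delta_a\delta_b/4<0.1$ and checking $q(0.42)-q(0.52)>0.02$.
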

\begin{proof}We consider the OR-gate first.
We have $\mathbb{P}[C=0 \mid x_i=0]=\eta_\epsilon\left(\left(a+\delta_a/2\right)\left(b+\delta_b/2\right)\right)$ and $\mathbb{P}[C=0 \mid x_i=1]=\eta_\epsilon\left(\left(a-{\delta_a}/{2}\right)\left(b-{\delta_b}/{2}\right)\right)$, which implies
\begin{eqnarray*}
\delta_c&=&\left(a\delta_b+b\delta_a\right)(1-2\epsilon)\\
c&=&\eta_\epsilon\left(ab+{\delta_a\delta_b}/{4}\right).
\end{eqnarray*}
Increasing $\epsilon$ decreases $|\delta_c|$ as well as also $q(c)$, since $c$ gets closer to $1/2$ and $q$ decreases towards $1/2$. Thus we may assume $\epsilon=\e$. Further, we may assume $|\delta_a|q(a)\ge |\delta_b|q(b)$. Note that, for $\delta_a=0$ we then also have $\delta_b=0$ and the Lemma holds trivially. In the remainder we therefore assume $\delta_a \neq 0$. In fact, we will even assume $\delta_a > 0$: In case $\delta_a < 0$ we can just formally replace every occurrence of $\delta_a$ and $\delta_b$ with $-\delta_a$ resp. $-\delta_b$. Because of the absolute value signs, this will not change the validity of (\ref{Yeah}). 
So we have to prove
\begin{eqnarray}\label{Yeah}
&&(1-2\epsilon)\left|a\delta_b+b\delta_a\right|q\left(\eta_\epsilon\left(ab+\delta_a\delta_b/4\right)\right) \\
\nonumber &&\le \theta |\delta_a|q(a).
\end{eqnarray}
 In the remainder, we will repeatedly use that $a$ and $b$ are bounded between $\e$ and $1-\e$ and that in this range, $0.9 < q_{min} \le q(a) \le q_{max} < 2.1$, without mentioning it each time. We distinguish the following cases:

$\mathbf{\delta_b> 0}$: Since we assumed $|\delta_a|q(a) \ge |\delta_b|q(b)$, it is enough to prove (\ref{Yeah}) where we replace the first occurrence of $\delta_b$ by $\delta_a{q(a)}/{q(b)}$. Cancelling $\delta_a$ and multiplying by $q(b)$ we get
\begin{equation}
\label{Bern}
\begin{array}{l}
\theta q(a)q(b)\\
~~- (1-2\e)\left(aq(a)+bq(b)\right)q\left(\eta_\e\left(ab+\delta_a\delta_b/4\right)\right)\\
\ge 0.
\end{array}
\end{equation} 

In case $|a-x_0|\ge 1/50$ or $|b-x_0|\ge 1/50$, note that $\delta_a\delta_b/4 \le (1-\e-a)(1-\e-b)$. If we set $\mu=\delta_a\delta_b/4$ and $\theta=1$, then by Fact \ref{fact:1} the lhs of (\ref{Bern}) is greater than $0.0003$. This implies the existence of a $\theta<1$ for (\ref{Bern}) and settles both parts of the Lemma.

We are left with the case $|a-x_0|< 1/50$ and $|b-x_0|< 1/50$.  By (\ref{probs}) we can then bound $\delta_a/2\le 1-\e -a\le 1-\e-x_0+1/50<0.33$ and similarly  $\delta_b/2<0.33$, i.e.\ $(1-2\e)\delta_a\delta_b/4 <0.1$. 
We also note that in our case $0.37< \eta_\e(ab)<0.42$. 
By convexity, $\min_{0.37\le x \le 0.42}q(x)-q(x+0.1)=q(0.42)-q(0.52)>0.02$, and thus
$q(\eta_\e(ab)+0.1)<q(\eta_\e(ab)) - 0.02$. This last inequality, convexity of $q$ and $(1-2\e)\delta_a\delta_b/4 <0.1$ imply $q(\eta_\e(ab)+(1-2\e)\delta_a\delta_b/4)<q(\eta_\e(ab)) - \frac{0.02}{0.1}(1-2\e)\delta_a\delta_b/4$. Noting that $\eta_\e(ab)+(1-2\e)\delta_a\delta_b/4=\eta_\e(ab + \delta_a\delta_b/4)$
this becomes
\begin{eqnarray}
 \label{better}
&&q(\eta_\e(ab+\delta_a\delta_b/4))\\
\nonumber &&<q(\eta_\e(ab))-(1-2\e)\delta_a\delta_b/20.
\end{eqnarray}
In particular $q(\eta_\e(ab+\delta_a\delta_b/4))<q(\eta_\e(ab))$. Plugging the lhs of this into (\ref{Bern}) and using Proposition \ref{prop:1} implies  (\ref{Bern}) for $\theta=1$. This establishes part \ref{equal} of the Lemma for $\delta_b>0$.

Now part \ref{strict} of the Lemma. Let $\delta_a\ge \delta$ or $\delta_b\ge\delta$. 
Consider first the case that $\delta_b$  is not too small compared to $\delta_a$, say $\delta_b \ge \delta_a/100$. Together with our assumption $|\delta_a|q(a)\ge |\delta_b|q(b)$ this implies $(1-2\e)\delta_a\delta_b/20 \ge (1-2\e)\delta^2/2000$. With (\ref{better}) we then get 
$q(\eta_\e(ab+\delta_a\delta_b/4))+ c <q(\eta_\e(ab))$ for  $c=(1-2\e)\delta^2/2000>0$ and putting this into (\ref{one0}) gives 
$q(a)q(b) - (1 - 2\e)(aq(a) + bq(b))(q(\eta_\e(ab+\delta_a\delta_b/4))+ c) > 0.$ This
implies the existence of a $\theta<1$ for (\ref{Bern}) and establishes part 2 of the Lemma when  $\delta_b \ge \delta_a/100$.

If $\delta_b$ is small, i.e.\ $\delta_b < \delta_a/100$, then upper bounding the first occurrence of $\delta_b$ by $\delta_aq(a)/q(b)$ to get from (\ref{Yeah}) to (\ref{Bern}) was far from tight. A better bound is $\delta_b < \delta_aq(a)/(10q(b))$, which derives from $q(a)/(10q(b))\ge q_{min}/(10q_{max})>1/100$. Analogously to the derivation of (\ref{Bern}) we get
\begin{equation}
\label{special}
\begin{array}{l}
\theta q(a)q(b) \\
-(1-2\e)\left({aq(a)}/{10}+bq(b)\right)q\left(\eta_\e\left(ab+{\delta_a\delta_b}/{4}\right)\right)\\
\ge 0. 
\end{array}
\end{equation} 
By (\ref{probs}), $a> \e$. 
Thus, $aq(a)>\e q_{min}$ and also $q\left(\eta_\e\left(ab+\delta_a\delta_b/4\right)\right) >q_{min}$. Hence, the lhs of (\ref{special}) is at least $(1-2\e)\e q_{min}^29/10$ smaller than the lhs of (\ref{Bern}). Since we already proved earlier that (\ref{Bern}) holds for $\theta=1$ without the restriction $\delta_b < \delta_a/100$, we conclude that (\ref{special}) holds for some $\theta<1$.  This establishes part \ref{strict} of the Lemma for $\delta_b < \delta_a/100$.

$\mathbf{\delta_b \le 0}$: It is enough to prove (\ref{Yeah}) where we replace $|a\delta_b+b\delta_a|$ by (a) $|b\delta_a|$ or (b) $|a\delta_b|$. If in case (a) we cancel $\delta_a$ and $q(a)$ after the replacement, we see that a $\theta<1$ must exist if 
\begin{equation}
\label{p1}
q(a)-(1-2\e)bq(\eta_\e(ab+\delta_a\delta_b/4)) \ge \chi ,
\end{equation}
for some $\chi>0$. Note that in case $a\le 1/2$ we have  $-(a-\e)(1-\e-b) \le \delta_a\delta_b/4 \le 0$ and in case $ 1/2 \le a $ we have   $-(1-\e-a)(1-\e-b) \le \delta_a\delta_b/4 \le 0 $. The Lemma then follows from Fact \ref{fact:3}.

For case (b) we note that $|a\delta_b|\le a\delta_aq(a)/q(b)$. Replacing $|a\delta_b+b\delta_a|$ in (\ref{Yeah}) by  $a\delta_aq(a)/q(b)$ and rearranging terms we get exactly the same as (\ref{p1}), with $a$ and $b$ swapped. We proceed as in case (a).

We now consider the PARITY-gate. First note, that if the two input wires of a noiseless PARITY gate are independently $0$ with probability $\alpha$ resp. $\beta$, then the output wire will be $0$ with probability $\alpha\beta+(1-\alpha)(1-\beta)$. Thus, in our case
$$\begin{array}{lcl}
&&\mathbb{P}[C=0 \mid x_i=0]\\
&=&
\eta_\epsilon(\left(a+{\delta_a}/{2}\right)\left(b+{\delta_b}/{2}\right)\\&&~+\left(1-a-{\delta_a}/{2}\right)\left(1-b-{\delta_b}/{2}\right))\\
\mbox{and}\\ 
&&\mathbb{P}[C=0 \mid x_i=1]\\
&=&\eta_\epsilon(\left(a-{\delta_a}/{2}\right)\left(b-{\delta_b}/{2}\right)\\&&~+\left(1-a+{\delta_a}/{2}\right)\left(1-b+{\delta_b}/{2}\right))
\end{array}$$
 which implies
\begin{eqnarray*}
c&=&\eta_\epsilon\left(ab+(1-a)(1-b)+{\delta_a\delta_b}/{2}\right)\\
\delta_c&=&\left((2a-1)\delta_b+(2b-1)\delta_a\right)(1-2\epsilon)
\end{eqnarray*}
We need to prove
\begin{eqnarray}
\label{pari}
\begin{array}{cl}
&\left|(2a-1)\delta_b+(2b-1)\delta_a\right|(1-2\epsilon)\times \\
&~~~~~~q(\eta_\epsilon\left(ab+(1-a)(1-b)+{\delta_a\delta_b}/{2}\right)) \\
&\le \theta |\delta_a| q(a).
\end{array}
\end{eqnarray}
As for the OR-gate we only need to consider $\epsilon=\e$ and may assume $\delta_a\ge 0$ w.l.o.g, because otherwise we can just change the signs of both $\delta_a$ and $\delta_b$. Also, w.l.o.g. we assume $|\delta_a|q(a)\ge |\delta_b|q(b)$. If $\delta_a=0$, then also $\delta_b=0$ and the Lemma becomes trivial. So we assume $\delta_a>0$. 
Further, we may assume $b\ge 1/2$ (and therefore  $(2b-1)\delta_a\ge 0$), because formally replacing  $a$ and $b$ by $1-a$ and $1-b$ does not change (\ref{pari}).
We condition on the sign of $2a-1$.

First ${2a-1 \ge 0}$. It is enough to prove (\ref{pari}), where we replace the first occurrence of $\delta_b$ by $\delta_aq(a)/q(b)$, since we assumed $|\delta_a|q(a)\ge |\delta_b|q(b)$. Cancelling $\delta_a$ and rearranging terms, the existence of a $0\le \theta<1$ for (\ref{pari}) then follows from 
\begin{eqnarray*}
\label{c1}
\begin{array}{cl}
q(a)q(b)-
((2a-1)q(a)+(2b-1)q(b))(1-2\e) \\~~~~~~~~~\times q(\eta_\e(ab +(1-a)(1-b) + \delta_a\delta_b/2))\\
\ge  \chi>0.~~~~~~~~~~~~~~~~~~~~~~~~~~~~~~~~~~~~~~~~~~~~~~~ 
\end{array}
\end{eqnarray*}
This inequality follows from Fact \ref{fact:2a} by noting that $|\delta_b|\le 2(1-\e-b)$ and $|\delta_a|\le 2(1-\e-a)$. 

In case  ${2a-1 < 0}$ we can proceed similarly, where this time we replace the first occurrence of $\delta_b$ by $-\delta_aq(a)/q(b)$ and bound $|\delta_a|\le 2(a-\e)$. The resulting inequality follows from Fact \ref{fact:2b}.
\end{proof}

\section{Proof of Theorem \ref{the:main}}
\label{sec:mainproof}
\begin{proof}
Let $f$ be any function and let $F$ be any formula with noisy gates that fail independently with probability at least $\e$. Let $F$ compute $f$ with bias $\Delta$. We show that $f$ depends on at most a constant number of bits, i.e.\ $d(f) \le c(\Delta)$, for some function $c(\Delta)$. 

Before starting we note the following: Every $\epsilon$-noisy fan-in-2 gate can be constructed from an $\epsilon$-noisy PARITY- or an $\epsilon$-noisy OR-gate, perfect NOT-gates and constant inputs. Hence, we may assume  that $F$ is constructed from perfect NOT-gates and noisy PARITY-gates and OR-gates. 

Let $x_i$ be an input bit on which $f$ depends with the additional property that any input wire of $F$ carrying $x_i$ has depth at least $\left\lceil \log_2 d(f) \right\rceil$. Because all gates in $F$ have fan-in at most $2$, the existence of such $x_i$ is guaranteed. Fix all other input bits such that the output of $F$ changes when flipping $x_i$.

Set $\mathrm{D}=\left\lceil \log_2 d(f)\right\rceil-1$ and $\delta=\frac{\Delta}{2q_{max}}$. Let $\theta<1$ be given by Lemma \ref{lem:bias} for this $\delta$. In case this results in $\theta<1-2\e$, set $\theta=1-2\e$. (The adjustment $\theta\ge 1-2\e$ is not really needed, but will later simplify the proof.) We will prove inductively that for any wire $C$ at depth $d\le \mathrm{D}$
\begin{equation}
 \label{invariant}
q(c)|\delta_c|\le \max\{\tfrac{\Delta}{2},\theta^{\mathrm{D}-d}q_{max}\}.
\end{equation}

For $d=\mathrm{D}$ (\ref{invariant}) holds trivially.  Now take any wire $C$ in $F$ with depth $d < \mathrm{D}$. We distinguish what signal $C$ carries.

Firstly, $C$ can be an input wire carrying $x_j$. Then necessarily $i \neq j$, because input wires carrying $x_i$ have depth at least $\mathrm{D}+1$. Thus, $\delta_c=0$ and (\ref{invariant}) holds. 

Secondly, $C$ can be the output of a noiseless NOT-gate, which has input wire $B$. Note that since we do not count NOT-gates in the depth of a wire, $depth(C)=depth(B)$, $c=1-b$ and $\delta_c=-\delta_b$. Then, by symmetry of $q$ around $1/2$ we get (\ref{invariant}) for $C$ from the same statement for $B$. 

Thirdly, $C$ can be the output of gate $G$, with $G$ either an OR-gate or a PARITY-gate. Let the input wires to $G$ be $A$ and $B$.
If one wire is a constant, say $A$, then gate $G$ is essentially a (noisy) gate with fan-in 1. Hence, $G$ always outputs either a (noisy) $0$ or $1$, or $G$ is the noisy identity- or the noisy NOT-gate. In the first two cases $\delta_c=0$. In the last two cases we can easily calculate that $|b-1/2|(1-2\epsilon)=|c-1/2|$ and $|\delta_c|\le (1-2\e) |\delta_b|$. Because $q$ decreases monotonically towards $1/2$ and we chose  $\theta\ge 1-2\e$, (\ref{invariant}) holds.

So we are left with the case where both inputs to $G$ are non-constant. Since $d< \mathrm{D}$, both wires $A$ and $B$ are the output of some noisy gate, so the conditions (\ref{probs}) in Lemma \ref{lem:bias} are satisfied. We may assume $|\delta_b|q(b) \le |\delta_a|q(a)$ w.l.o.g. If $|\delta_a|q(a)\le \Delta/2$, then by part \ref{equal} from Lemma \ref{lem:bias} also $|\delta_c|q(c)\le \Delta/2$ and (\ref{invariant}) holds. If $|\delta_a|q(a)> \Delta/2$, then $|\delta_a|>\frac{\Delta}{2q_{max}}=\delta$. Then (\ref{invariant}) follows from part \ref{strict} of Lemma \ref{lem:bias} and the inductive assumption.

Let $O$ be the output wire of $F$, which by assumption has bias $\Delta$. Because $q(o)\Delta \le \Delta /2$ is impossible (since $q(o)\ge q_{min} > 1/2$) we get from (\ref{invariant}): $q(o) \Delta \le \theta^{\mathrm{D}}q_{max}$, and further $ \Delta \le \theta^{\left\lceil \log_2 d(f)\right\rceil-1}({q_{max}}/{q_{min}})$, which implies
$\frac{\log_2(\Delta q_{min}/q_{max})}{\log_2 \theta}+1\ge \log_2d(f)$. Since $\theta$ depends only on $\Delta$, $d(f)$ is upper bounded by the function 
$$c(\Delta):=2\left({\Delta q_{min}}/{q_{max}}\right)^{1/\log_2 \theta}.$$
\end{proof}

\section{Discussion}
\label{sec:discussion}
We have shown a tight threshold for the noise which is tolerable for computation by formulas with gates of fan-in at most 2. This is the first result for gates with an even number of wires. It should be possible to generalize it to other fan-in, although the proof is probably more tedious. 

The same bound probably also applies to \emph{circuits} with gates of fan-in at most $2$. 

\subsection{Choice of potential function}\label{choice}
So far we have not given any idea of why we chose this particular potential function. In fact, this choice is not unique. The choice of $q$ was determined as follows: (1) It is convenient to choose $q$ symmetric around $1/2$, so applying a NOT-gate to wire $A$ does not change the value of $|\delta_a|q(a)$. (2) It is natural to scale $q$ such that $q(x_0)=1$. 
(3) After these choices,  we have to choose $\frac{d}{dx}q(x)|_{x=x_0}=\frac{1}{2}(-1+\sqrt{7})$. This ensures that (\ref{taylor}) does not have a linear term in $s_b$ and only starts with the quadratic term, i.e.\ ``$r_{-1}(k) \equiv 0$''. (4) We also need $\frac{d^2}{dx^2}q(x)|_{x=x_0}> 16-4\sqrt{7} \approx 5.42$, because that makes $r_0(k)>0$ for $-1\le k \le 1$. 
The rest of the choices are not so binding. 

However, a quadratic function alone is not enough.  For (\ref{taylor}) to be at least $0$ one also has constraints on higher derivatives of $q$. The expression in (\ref{q}) for $q$ is one of the ``nicer'' possible potential functions. One can also find a possible  $q$, by dividing the interval $[\e,1-\e] $ into smaller intervals and define $q$ as different quadratic functions in each of these intervals.



\section{Acknowledgements}
I would like to thank Intel chip designers Bob Colwell and Shekhar Borkar for pointing out \cite{s:noisy} and \cite{Bobgodwell}. I am also grateful to Ronald de Wolf, Peter Harremo\"es and an anonymous referee for some comments and to Ben Reichardt for proof-reading and pointing out some errors.

\bibliographystyle{abbrv}

\end{document}